\theoremstyle{plain}
\newtheorem{theorem}{Theorem}
\newtheorem{definition}{Definition}
\theoremstyle{definition}
\DeclareMathOperator{\Tr}{Tr}
\newcommand{\cD}{{\cal D}}
\newcommand{\cE}{{\cal E}}
\newcommand{\cG}{{\cal G}}
\newcommand{\cJ}{{\cal J}}
\begin{document}

\title{Quenched equals annealed at leading order in the colored SYK model}

\author{\ Razvan Gurau\footnote{rgurau@cpht.polytechnique.fr, 
 Centre de Physique Th\'eorique, \'Ecole Polytechnique, CNRS, Universit\'e Paris-Saclay, F-91128 Palaiseau, France
 and Perimeter Institute for Theoretical Physics, 31 Caroline St. N, N2L 2Y5, Waterloo, ON, Canada.}}

\maketitle

\abstract{
We consider a colored version of the SYK model, that is we distinguish the $D$ vector fermionic fields involved in the interaction by a color.
We obtain the full $1/N$ series of both the quenched and annealed free energies of the model and show that 
at leading order the two are identical. The results can be used to study systematically the $1/N$ corrections to this leading order behavior.
}

\section{Introduction}

The SYK  model \cite{Sachdev:1992fk,Kitaev,Maldacena:2016hyu,Polchinski:2016xgd,Fu:2016vas} has recently attracted 
a lot of activity in the context of the $AdS/CFT$ duality.  The model can be solved (in the large $N$ limit) and 
develops a conformal symmetry in the infrared. The growth of the out of time order correlators saturates the chaos bound \cite{Maldacena:2015waa} 
suggesting that the SYK model and its variants are examples of (near) conformal field theories in one dimension dual to 
(near) extremal black holes in $AdS_2$ space.
Starting from a vector real fermion $\chi_a$ with $N$ components, the SYK model is defined by the action:
\[
 \frac{1}{2}\int_{-\beta/2}^{\beta/2} d \tau  \sum_{ a} \chi_{a}  (\tau) \partial_{\tau} \chi_{a} (\tau)  + J
 \sum_{a^1,\dots a^D} T_{a^1\dots a^D} \int_{-\beta/2}^{\beta/2} d\tau   \; \chi_{ a^1} (\tau) \dots   \chi_{a^D}(\tau) \;,
\]
where the couplings $T$ are time independent quenched random couplings with Gaussian distribution:
\[
 d\nu(T) = \bigg( \prod_{a^1,\dots a^D}  \sqrt{ \frac{N^{D-1} }{2\pi} }d T_{a^1,\dots a^D}   \bigg) \; e^{-\frac{1}{2} N^{D-1} \sum_{a^1,\dots a^D} T_{a^1\dots a^D} T_{a^1\dots a^D}  } \;.
\]
This model has a large $N$ limit dominated by melonic graphs 
\cite{Maldacena:2016hyu,Polchinski:2016xgd,Jevicki:2016bwu}. 
Random tensors \cite{review,RTM} generalize random matrices to higher dimensions and provide a new 
universality class of statistical physics models dominated by melonic graphs at large $N$.

Due to the quenching, the SYK model is not a genuine quantum mechanical model, but a statistical average of models. 
In a recent paper \cite{Witten:2016iux} Witten observed that the quenching can be eliminated if one considers a tensor model 
version of the SYK model \cite{Witten:2016iux,Gurau:2016lzk,Klebanov:2016xxf,Peng:2016mxj,Krishnan:2016bvg}:
one obtains a melonic phase in the large $N$ limit already for the annealed tensor SYK model.

In this paper we show that the quenching in the usual (vector) SYK models was in fact not needed to begin with.
We consider a colored version of the SYK model (that is we distinguish the $D$ fields in the interaction by a color index).
Building on results on random tensors \cite{RTM}, we prove that at leading order in $1/N$ the quenched and 
annealed versions of the model are identical. 
This result is quite robust: as it will be apparent from the proof we present below, it does not depend on the 
dimension (number of colors) $D$, the statistics (fermionic or bosonic) of the vector degree of freedom or 
their dynamics (propagator), provided of course that there are no ultraviolet divergences.

The colored model we discuss here has been introduced in the random tensor models context 
some years ago \cite{bonzom2013universality}. In the SYK context, it is one of the models discussed 
at length in \cite{Gross:2016kjj}. 

This result suggests that in the colored version of the vector SYK model one can promote the time independent random couplings
to a genuine field with its own dynamic. Choosing this dynamic appropriately should allow one to build also for vector SYK models
(as it was possible for tensor SYK models) true quantum field theories with a melonic phase at large $N$.
Recently a model of this kind has been discussed in \cite{Nishinaka:2016nxg}. However, such models should be treated with care: as explained
in \cite{Witten:2016iux} in this case the thermodynamic entropy of the new field (which has $N^D$ components) largely dominates that of the 
fermions (which have $DN$ components).

A similar result for the uncolored SYK model or the random energy model has been known for a long time\footnote{We thank an anonymous referee for pointing this out.} in the 
physics literature \cite{parcollet1999non,georges2000mean,Sachdev:2015efa,Michel:2016kwn}.
One argument is based on the replica trick: the diagonal ansatz for the replicas dominate the large $N$ saddle point (\emph{i.e.} in the $N \to \infty$ limit the model does not have a glassy phase) and this implies 
that the quenched and annealed averages  coincide. A second argument consists in noting that the first perturbative correction (in $J$) to the quenched average is 
subleading in $1/N$. While both arguments are of course correct, neither of the two is mathematically rigorous.

In this paper we give a mathematically rigorous proof that the quenched and annealed free energies of the colored SYK model are equal at leading order in $1/N$. This proof does not require the use of the replica trick and it holds at all orders in $J$.
The results presented in this paper go beyond the leading order in $1/N$. We derive the all order expansions in $1/N$ of the quenched and annealed free energies
which are needed in order to study the corrections to the leading order behavior like the (subleading in $1/N$) replica symmetry breaking effects.

\section{The colored SYK model}

Let us consider $D$ complex\footnote{We use complex fields to simplify the proofs, but our result should go trough with small modifications for real fermionic fields.} 
fermionic fields $\psi^1 \dots \psi^D$ living in the fundamental representation of $U(N)$ (that is each field $\psi^c$ is a vector with $N$ components). 
We call the upper index $c\in \{1,\dots D\} $ the \emph{color} of the field $\psi^c$ and we consider the action:
\[
 S = \int d\tau \bigg[ \sum_{c=1}^D  \left( \sum_{a^c} \bar \psi^c_{a^c} \frac{d}{d\tau} \psi^c_{a^c} \right) - J \sum_{a^1,\dots a^D} T_{a^1\dots a^D} \psi^1_{a^1} \dots \psi^D_{a^D} - 
\bar J  \sum_{a^1,\dots a^D}   \bar \psi^D_{a^D} \dots \bar \psi^1_{a^1} \bar T_{a^1\dots a^D} \bigg] \;.
\]
The time independent random coupling $T$ has Gaussian distribution:
\begin{align*}
& [d\bar T dT] \equiv  \bigg( \prod_{a^1,\dots a^D}  \frac{N^{D-1} d\bar T_{a^1\dots a^D} d   T_{a^1\dots a^D} }{2\pi \imath} \bigg) \;,\crcr
& d\nu(T,\bar T) =  [d\bar T dT] \; e^{- N^{D-1} \sum_{a^1,\dots a^D} T_{a^1\dots a^D} \bar T_{a^1\dots a^D}  }\;.
\end{align*}
If $D$ is even we take $T$ to be a boson, while if $D$ is odd we take it to be a fermion.

In the quenched version of the model one computes the thermal partition function $Z^T$, the thermal free energy $\ln Z^T$ and finally the averaged thermal free energy $\Braket{  \ln Z^T }$: 
\[
 Z^T = \int[d\bar \psi d\psi] \; e^{-S} \;, \qquad  \Braket{\ln Z^T} = \int  d\nu(T,\bar T) \; \ln Z^T \;.
\]
In the annealed version one computes the partition function $\Braket{Z^T} $ and the free energy $\ln \Braket{Z^T}$:
\[
 \Braket{Z^T} =\int d\nu(T,\bar T) \; Z^T \;, \qquad    \ln \Braket{Z^T} \;.
\]

In this paper we show that both $\Braket{\ln Z^T}$ and $\ln \Braket{Z^T}$ scale like $N$ and that:
\[
 \boxed{  \lim_{N\to \infty} \frac{  \Braket{\ln Z^T}  }{N} = \lim_{N\to \infty} \frac{  \ln \Braket{Z^T}  }{N}  \; .} 
\]

Similar results hold for the correlation functions of the two models: the disorder averages of the thermal correlations functions equal the 
the full correlation functions at leading order.

\section{Proof of the main result}

We first review in subsection \ref{sec:part0} some standard properties of edge colored graphs. More details about such graphs can be found in \cite{review,RTM}.
In subsections \ref{sec:part1}, \ref{sec:part2} and \ref{sec:part3} we present the results of this paper and their proofs.

\subsection{Edge colored graphs}\label{sec:part0}

We will be dealing below with $D$-- and $(D+1)$--colored graphs.

\begin{definition}\label{def:color}
 A bipartite, edge $(D+1)$--colored graph $\cG$ is a graph such that:
 \begin{itemize}
  \item the set of vertices is the disjoint union of the set of black vertices and the set of white vertices
  \item all the edges connect a black and a white vertex and have a color $c \in \{0,1, \dots D \}$
  \item all the vertices are $(D+1)$--valent and all the edges incident at a vertex have different colors
 \end{itemize}
\end{definition}

$D$--colored graphs are defined similarly, taking the set of colors to be $\{1,\dots D\}$. 
Some examples of $3$--colored graphs are presented in Fig.~\ref{fig:tensob1}.

\begin{figure}[htb]
\begin{center}
\includegraphics[width=4cm]{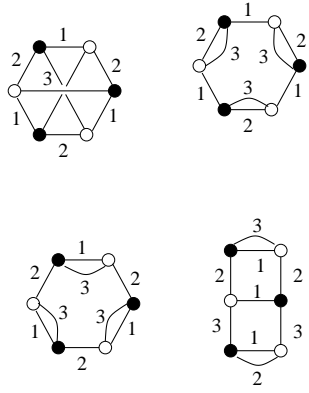}
\caption{Edge $3$-colored graphs.}
\label{fig:tensob1}
\end{center}
\end{figure}

We call the subgraphs with two colors $c_1c_2$ of $\cG$ the \emph{faces with colors $c_1c_2$} of $\cG$. 
We denote $F^{c_1c_2}(\cG)$ the number of faces with colors $c_1c_2$ of $\cG$ and $F(\cG)$ the total number of faces of $\cG$.
We furthermore denote $k(\cG)$ the number of white vertices of $\cG$ (which equals the number of black vertices of $\cG$).

\begin{theorem}\label{thm:deg}\cite{RTM,review} Let $D\ge 2$.
 For any connected $(D+1)$--colored graph $\cG$ there exists a non negative integer $\omega(\cG)$ such that:
 \begin{equation}
  F(\cG) = \frac{D(D-1)}{2} k(\cG) + D - \frac{2}{(D-1)!} \omega(\cG)  \;.
 \end{equation}
\end{theorem}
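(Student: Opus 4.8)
The plan is to introduce the \emph{jackets} of $\cG$ and to apply Euler's formula to each of them. A jacket $\cJ$ is determined by a cyclic ordering $\sigma = (c_0\, c_1 \cdots c_D)$ of the $D+1$ colors, taken up to orientation; there are $D!/2$ such jackets. To each jacket I would associate a ribbon graph built on the same vertices and edges as $\cG$: at every black vertex the incident edges are cyclically arranged following $\sigma$, and at every white vertex following the reversed order $\sigma^{-1}$. First I would check that this ribbon graph is a closed orientable surface. The opposite cyclic orderings at black and white vertices, which is possible precisely because $\cG$ is bipartite, furnish a global orientation, and the standard face-tracing procedure shows that the faces of this surface are exactly the bicolored cycles of $\cG$ whose two colors are consecutive in $\sigma$.

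Granting this, Euler's formula $V_\cJ - E_\cJ + F_\cJ = 2 - 2 g_\cJ$ holds with a non-negative integer genus $g_\cJ$. Counting the ingredients is then immediate: every jacket has $V_\cJ = 2 k(\cG)$ vertices and $E_\cJ = (D+1)\, k(\cG)$ edges, independently of $\sigma$ (each of the $2k$ vertices is $(D+1)$-valent, giving $(D+1)k$ edges), while $F_\cJ$ is the number of faces $F^{c_i c_{i+1}}(\cG)$ summed over the $D+1$ pairs of colors that are consecutive in $\sigma$. I would rewrite Euler's relation for a single jacket as $F_\cJ = (D-1)\, k(\cG) + 2 - 2 g_\cJ$, using $2k - (D+1)k = -(D-1)k$.

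Next I would sum this relation over all $D!/2$ jackets. The key combinatorial input is that a fixed pair of colors $\{c_1,c_2\}$ is consecutive in exactly $(D-1)!$ of the jackets; this follows by double counting the incidences (jacket, consecutive pair), of which there are $\tfrac{D!}{2}(D+1) = \tfrac{(D+1)!}{2}$, distributed evenly among the $\binom{D+1}{2}$ pairs. Hence $\sum_\cJ F_\cJ = (D-1)!\, F(\cG)$, and summing the rewritten Euler relation yields $(D-1)!\, F(\cG) = \tfrac{D!}{2}(D-1)\, k(\cG) + D! - 2\sum_\cJ g_\cJ$. Dividing by $(D-1)!$, using $D!/(D-1)! = D$, and setting $\omega(\cG) = \sum_\cJ g_\cJ$ produces exactly the claimed identity, with $\omega(\cG)$ a non-negative integer because it is a sum of genera.

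The main obstacle is the topological step of the first paragraph: one must verify rigorously that each jacket really is a closed orientable surface, i.e.\ that the ribbon structure is well defined, that bipartiteness yields orientability, and that the traced faces coincide with the consecutive-color bicolored cycles. Once this is in place, the remainder is the elementary averaging over jackets sketched above, the only arithmetic to watch being the two cancellations that produce the coefficients $\tfrac{D(D-1)}{2}$ and $D$ in the final formula.
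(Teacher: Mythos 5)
Your proof is correct and follows essentially the same route as the paper: embed $\cG$ via jackets, apply Euler's formula to each, and average over all jackets, with $\omega(\cG)$ defined as the total genus. The only (cosmetic) difference is that you sum over the $D!/2$ unoriented jackets rather than the $D!$ cyclic permutations, which makes the integrality of $\omega(\cG)$ immediate instead of requiring the paper's observation that $g(\cJ_{\pi}) = g(\cJ_{\pi^{-1}})$.
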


\begin{proof}
 Any cyclic permutation $\pi$ over the colors $\{0,\dots D\}$ yields an embedding of $\cG$ into a surface by arranging the edges clockwise in the order $0,\pi(0), \pi^2(0), \dots, \pi^{D}(0)$
 around the white vertices and in the reverse order around the black vertices. Each embedding is a combinatorial map $\cJ_{\pi}$ (called a jacket) whose faces are the faces with colors $\pi^q(0) \pi^{q+1}(0)$ with $q=0,\dots D$
 of $\cG$. The Euler relation for $\cJ_{\pi}$ reads:
 \[
  \sum_{q=0}^D F^{\pi^q(0)\pi^{q+1}(0)} =   (D+1) k(\cG) - 2k(\cG) + 2 - 2 g(\cJ_{\pi})\;,
 \]
where $g(\cJ_{\pi})$ is the genus of the map $\cJ_{\pi}$. Summing over the $D!$ possible cycles $\pi$ and 
taking into account that each face belongs to $2(D-1)!$ jackets (corresponding to the cycles such 
that $\pi(c_1) =c_2$ or $\pi(c_2)=c_1$ for the faces with colors $c_1c_2$)
proves the theorem with: 
\[
 \omega(\cG)  = \frac{1}{2} \sum_{\pi} g(\cJ_{\pi}) \;,
\]
which is an integer as $g(\cJ_{\pi}) = g(\cJ_{\pi^{-1}})$
 
\end{proof}

The integer $\omega(\cG)$, which is obviously an integer multiple of $(D-1)! / 2$, is called the \emph{degree} of $\cG$.
The degree of a disconnected graph is the sum of the degrees of its connected components and 
if $\cG$ has $C(\cG)$ connected components, Theorem~\ref{thm:deg} becomes:
\[
 F(\cG) = \frac{D(D-1)}{2} k(\cG) + D C(\cG)- \frac{2}{(D-1)!} \omega(\cG) \;.
\]
Observe that this relation holds also for $2$-colored graphs (which are just unions of cycles with $2$ colors), with the convention that the degree of any $2$-colored graph is $0$.

We call the connected subgraphs made by the edges with colors $\{1,\dots D\}$ of $\cG$ the $D$--\emph{bubbles} of $\cG$. They are 
$D$-colored graphs in the sense of the Definition \ref{def:color}. 

\begin{theorem}\label{thm:ineg}\cite{RTM,review}
 Let us denote $\cG\setminus \cE^0 $ the $D$-colored graph obtained from $\cG$ by deleting the edges of color $0$ (that is the disjoint union of the $D$--bubbles of $\cG$). 
  The graph $\cG$ can be connected or not. Observe that if $\cG$ is connected it can still be the case that $\cG\setminus \cE^0 $ is disconnected. We have:
 \[
  \omega(\cG) \ge D  \omega(\cG\setminus \cE^0) \;. 
 \]
\end{theorem}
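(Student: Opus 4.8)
The plan is to prove the inequality jacket by jacket and then to sum over jackets, using the explicit description of the degree as a sum of genera furnished by the proof of Theorem~\ref{thm:deg}. Recall that for a cyclic permutation $\pi$ of $\{0,1,\dots,D\}$ the jacket $\cJ_\pi$ of $\cG$ is the ribbon graph (combinatorial map) carried by all vertices and all edges of $\cG$, with the edges arranged clockwise around each white vertex in the order $0,\pi(0),\dots,\pi^D(0)$. First I would observe that deleting the color-$0$ edges from $\cJ_\pi$ produces precisely the jacket $\cJ_{\hat\pi}$ of the $D$-colored graph $\cG\setminus\cE^0$, where $\hat\pi$ is the cyclic permutation of $\{1,\dots,D\}$ obtained by erasing $0$ from the cycle $\pi$. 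Indeed, removing the color-$0$ edge at each white vertex leaves the surviving edges in the clockwise order $\pi(0),\dots,\pi^D(0)$, which is exactly the rotation system defining $\cJ_{\hat\pi}$.

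The key step is then the single-jacket inequality $g(\cJ_\pi)\ge g(\cJ_{\hat\pi})$, where on the right I take the total genus of the (possibly disconnected) restricted jacket. I would prove this by deleting the $k(\cG)$ color-$0$ edges one at a time and tracking the Euler characteristic $\chi = V-E+F = 2C - 2g$. Removing one edge of a ribbon graph leaves $V$ unchanged, lowers $E$ by one, and changes $F$ by $\pm1$: the jackets here are orientable, so the two sides of the deleted ribbon lie either in two distinct boundary components, which then merge, or in a single one, which then splits. Hence each deletion changes $\chi$ by $1+\Delta F\in\{0,2\}$, so $\chi$ is non-decreasing along the process. Since $\cG$ is connected, $\chi(\cJ_\pi)=2-2g(\cJ_\pi)$, whereas $\chi(\cJ_{\hat\pi})=2\,C(\cG\setminus\cE^0)-2g(\cJ_{\hat\pi})$; monotonicity of $\chi$ then yields $g(\cJ_\pi)\ge g(\cJ_{\hat\pi})+\big(C(\cG\setminus\cE^0)-1\big)\ge g(\cJ_{\hat\pi})$, the last step because $\cG\setminus\cE^0$ has at least one connected component.

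It remains to assemble the contributions. Using $\omega(\cG)=\tfrac12\sum_\pi g(\cJ_\pi)$ over the $D!$ cyclic permutations $\pi$ of $\{0,\dots,D\}$, together with the analogous $\omega(\cG\setminus\cE^0)=\tfrac12\sum_\sigma g(\cJ_\sigma)$ over the $(D-1)!$ cyclic permutations $\sigma$ of $\{1,\dots,D\}$, I would note that the assignment $\pi\mapsto\hat\pi$ is exactly $D$-to-one: the cyclic order $\hat\pi$ of $\{1,\dots,D\}$ is recovered from $\pi$ by deleting $0$, and $0$ can be reinserted in any of its $D$ gaps, consistently with $D\cdot(D-1)!=D!$. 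Summing the single-jacket inequality over all $\pi$ and grouping the terms according to their image $\hat\pi=\sigma$ then gives $\omega(\cG)=\tfrac12\sum_\pi g(\cJ_\pi)\ge\tfrac12\sum_\pi g(\cJ_{\hat\pi})=\tfrac12\sum_\sigma D\,g(\cJ_\sigma)=D\,\omega(\cG\setminus\cE^0)$, as claimed.

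I expect the main obstacle to be the careful bookkeeping in the single-jacket step: verifying that the restriction of $\cJ_\pi$ to the colors $\{1,\dots,D\}$ is genuinely the jacket $\cJ_{\hat\pi}$ (that is, that the rotation system restricts correctly), and confirming that a single edge deletion in an orientable ribbon graph always changes the face count by exactly $\pm1$, including the cases where the deletion disconnects the map or leaves lower-valent or isolated vertices. The slack $C(\cG\setminus\cE^0)-1$ in the bound is harmless here but makes clear why an inequality, rather than an equality, is the natural statement.
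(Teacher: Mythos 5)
Your proof is correct and is essentially the paper's own argument in expanded form: the paper likewise observes that each jacket of $\cG\setminus\cE^0$ is obtained from $D$ jackets of $\cG$ by erasing the color-$0$ edges, and that the genus can only decrease under edge deletion because each deletion changes the face count by exactly $\pm1$ (always $+1$ when a new connected component is separated), which is precisely your Euler-characteristic bookkeeping, followed by the same $D$-to-one summation over jackets. The only point to tidy up is that your final step assumes $\cG$ connected while the theorem allows disconnected $\cG$; since the genus, the degree and your monotone quantity $\chi = 2C - 2g$ are all additive over connected components, the general case follows verbatim, with the harmless slack $C(\cG\setminus\cE^0) - C(\cG)$ in place of $C(\cG\setminus\cE^0) - 1$.
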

\begin{proof}
Each jacket of $\cG\setminus \cE^0$ is obtained from $D$ jackets of $\cG$ by erasing the edges of color $0$ in the jacket. 
The genus of a map (which is the sum of the genera of its connected components if the map is disconnected) can only decrease by 
erasing edges. The number of faces can either increase or decrease by one when deleting an edge, and it always increases by one if the deletion separates a new connected 
component of the map. 
\end{proof}
 
It is particularly convenient to encode edge $D$--colored graphs (respectively $(D+1)$--colored graphs) in $D$-uples (respectively $(D+1)$--uples) of permutations. 
This goes as follows. Let us denote $\mathfrak{S}(k)$ the set of permutations of $k$ elements.
To any $D$--uple of permutations over $k$ elements $\sigma^{\cD} = (\sigma^{(1)},\dots \sigma^{(D)}) \in \mathfrak{S}(k)^D$
one associates the edge $D$--colored graph obtained
by drawing $k$ white vertices labeled $1,\dots k$, $k$ black vertices labeled $1,\dots k$ and connecting the white vertex $k$ to the  
black vertex $\sigma^{(c)}(k)$ by an edge of color $c$. Some examples are represented in Fig.~\ref{fig:tensobs}.

\begin{figure}[htb]
\begin{center}
\psfrag{t1}{$\sigma^{(1)} = (1)(23)$}
\psfrag{t2}{$\sigma^{(2)} = (12)(3)$}
\psfrag{t3}{$\sigma^{(3)} = (13)(2)$}
\psfrag{t11}{$\sigma^{(1)} = (1)(23)$}
\psfrag{t12}{$\sigma^{(2)} = (12)(3)$}
\psfrag{t13}{$\sigma^{(3)} = (12)(3)$}
\psfrag{t21}{$\sigma^{(1)} = (1)(23)$}
\psfrag{t22}{$\sigma^{(2)} = (12)(3)$}
\psfrag{t23}{$\sigma^{(3)} = (1)(23)$}
\psfrag{t31}{$\sigma^{(1)} = (1)(2)(3)$}
\psfrag{t32}{$\sigma^{(2)} = (12)(3)$}
\psfrag{t33}{$\sigma^{(3)} = (1)(23)$}
\includegraphics[width=8cm]{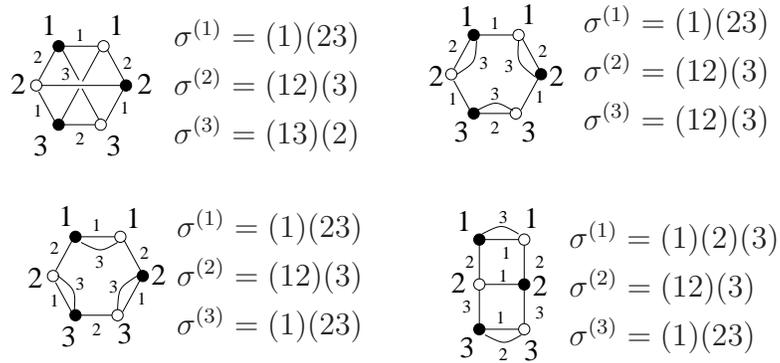}
\caption{$3$-colored graphs and associated permutations. The permutations $\sigma^{(c)}$ are written in cycle notation, the capital labels 
designate the vertices and the lower-case labels designate the colors of the edges.}
\label{fig:tensobs}
\end{center}
\end{figure}

Abusing the notation, we will denote $\sigma^{\cD}$ both the $D$--uple of permutations and the associated $D$--colored graph. 
The faces with colors $c_1c_2$ of $\sigma^{\cD}$ are the cycles of the permutation $\sigma^{(c_1)} [ \sigma^{ (c_2)} ]^{-1}$.
Edge $(D+1)$--colored graphs are 
encoded in a $(D+1)$--uple of  permutations $\sigma^{(0)} \sigma^{\cD}$.

In this notation, Theorem~\ref{thm:ineg} becomes:
\begin{equation}\label{eq:inegali}
 \omega(\sigma^{(0)}\sigma^{\cD})  \ge D  \omega(\sigma^{\cD}) \;, 
\end{equation}
and Theorem~\ref{thm:deg} applied for $(D+1)$-- and $D$--colored graphs with $k$ white vertices becomes:
\begin{align}\label{eq:faceface}
 F(\sigma^{(0)} \sigma^{\cD} ) &= \frac{D(D-1)}{2} k + D C(\sigma^{(0)}\sigma^{\cD})- \frac{2}{(D-1)!} \omega(\sigma^{(0)} \sigma^{\cD})\;, \crcr
 F(\sigma^{\cD} ) &= \frac{(D-1)(D-2)}{2} k + (D-1) C(\sigma^{\cD})- \frac{2}{(D-2)!} \omega(\sigma^{\cD})\;, 
\end{align}
where we recall that $  C(\sigma^{(0)}\sigma^{\cD})$ (respectively $C(\sigma^{\cD})$)
denotes the number of connected components of the graph $ \sigma^{(0)}\sigma^{\cD}$ (respectively $\sigma^{\cD}$).

\subsection{The thermal partition function}\label{sec:part1}

We define \cite{RTM} the \emph{trace invariant} associated to the edge colored graph $\sigma^{\cD}$ as:
\begin{align*}
 \Tr_{ \sigma^{\cD} } (T,\bar T) = \sum_{a_i^c,b_i^c}\bigg( \prod_{i=1}^k T_{a^1_i \dots a^D_i} \bar T_{b^1_i \dots b^D_i}    \bigg)
 \bigg( \prod_{c=1}^D \prod_{i=1}^c \delta_{a^c_i b^c_{\sigma^{(c)}(i)}}  \bigg) \;.
\end{align*}

In words, the trace invariant is built by putting tensors at the vertices of $ \sigma^{\cD} $ and contracting them along the edges. 
We take a tensor $T$ for every white vertex ($T_{a^1_i\dots a^D_i}$ for the white vertex $i$) and a complex conjugated tensor $\bar T$
for every black vertex ($ \bar T_{b_i^1,\dots b_i^D}$ for the black vertex $i$). Every edge in $ \sigma^{\cD}$ connects a white and a black vertex 
and has a color $c$. Say $\sigma^{(c)}(i) =j$, that is the white vertex $i$ and the black vertex $j$ are connected by an edge of color $c$. 
Corresponding to this edge, we contract the index in the position $c$ of the tensor $T_{a^1_i\dots a^D_i}$ (that is $a^c_i$) 
to the index in the position $c$ of the tensor $ \bar T_{b_j^1,\dots b_j^D}$ (that is $b^c_j$).

We denote the propagator of the SYK model by:
\[ 
C^{-1}(\tau,\tau') = \delta(\tau-\tau') \partial_{\tau'} \;,  \qquad C(\tau,\tau') = \frac{1}{2} \rm{sgn}(\tau-\tau') \;,
\]
and we define the \emph{partial amplitude} of the graph $\sigma^{\cD}$ as:
\[
A_{ \sigma^{\cD}} =  (J\bar J)^{k}\bigg( \prod_{c=1}^D \epsilon(\sigma^{(c)}) \bigg)
\int  \left( \prod_{i=1}^k   d\tau_{i} d\tau'_i \right) \;  \prod_{c=1}^D  \prod_{i=1}^k C(\tau_i,\tau'_{\sigma^{(c)} (i)}  ) \;,
\]
where $\epsilon(\sigma^{(c)}) $ is the signature of the permutation $\sigma^{(c)}$. As the moments of the Grassmann Gaussian integral for one field $\psi$ are:
\[
 \int [d\bar \psi d\psi] \; e^{ - \int d\tau d\tau' 
  \sum_a \bar \psi_a(\tau) (C^{-1})(\tau,\tau') \psi_a(\tau')} \; \prod_{i=1}^k \psi_{a_i} (\tau_i) \bar \psi_{b_i}( \tau'_i) = 
  \sum_{\sigma \in \mathfrak{S}(k)} \epsilon(\sigma) \prod_{i=1}^k  \delta_{a_i b_{\sigma(i) } } 
 C(\tau_i, \tau'_{\sigma(i)}) \;,
\]
integrating out the fermions, the thermal partition function and the thermal free energy of the colored SYK model become:
\begin{align}\label{eq:thermal}
 Z^T & = \sum_{k\ge 0 } \frac{1}{k!k!}   \sum_{\sigma^{\cD} \in \mathfrak{S}(k)^D} 
      A_{ \sigma^{\cD} } \Tr_{   \sigma^{\cD}   } (T,\bar T)  \; ,\crcr 
 \ln Z^T & = \sum_{k\ge 1 } \frac{1}{k!k!}  \sum_{ \sigma^{\cD} \in \mathfrak{S}(k)^D}^{  \sigma^{\cD} \text{ connected} } 
      A_{ \sigma^{\cD} } \Tr_{  \sigma^{\cD}  } (T,\bar T)  \;.
\end{align}

\subsection{Integrating the disorder}\label{sec:part2}

It remains to integrate out the disorder. For both the quenched and the annealed versions we need to compute:
\[
 \Braket{  \Tr_{  \sigma^{\cD}   } (T,\bar T) } = \int [d\bar T dT] \; e^{-N^{D-1} \sum_{a^1,\dots a^D} \bar T_{a^1 \dots a^D} T_{a^1\dots a^D} }   \Tr_{   \sigma^{\cD}   } (T,\bar T) \;,
\]
\emph{i.e.} the expectation of a trace invariant in a Gaussian random tensor model \cite{RTM}.
The only difference between the quenched and the annealed versions is that $ \sigma^{\cD} $ is a connected graph for the 
quenched model, while it can be disconnected in the annealed version. 

For any permutation $\sigma^{(0)}$, $[ \epsilon(\sigma^{(0)})]^D $ is $1$ for $D$ even (when $T$ is bosonic) and $\epsilon(\sigma^{(0)})$ for $D$ odd (when $T$ is fermionic) therefore the moments of the Gaussian 
tensor measure can be written for all $D$ as:
\begin{align*}
& \int [d\bar T dT] \; e^{-N^{D-1} \sum_{a^1,\dots a^D} \bar T_{a^1 \dots a^D} T_{a^1\dots a^D} }  \bigg( \prod_{i=1}^k T_{a^1_i \dots a^D_i} \bar T_{b^1_i \dots b^D_i}    \bigg)  \crcr
& = \sum_{\sigma^{(0)} \in \mathfrak{S}(k)} [ \epsilon(\sigma^{(0)})]^D \prod_{ i=1 }^k \left(  \frac{1}{ N^{D-1} } \prod_{c=1}^D \delta_{a^c_i b^c_{\sigma^{(0)}(i) } } \right) \;  .
\end{align*}

The Gaussian expectation of an invariant $\sigma^{\cD}$ will be a sum over edge $(D+1)$--colored graphs $\sigma^{(0)} \sigma^{\cD}$ with one new color, $0$.
One obtains a free sum per face with colors $0c$ for all $c = \{1,\dots D$) in the graph $ \sigma^{(0)} \sigma^{\cD}$, that is:
\begin{align*}
  \Braket{  \Tr_{  \sigma^{\cD} } (T,\bar T) } = \sum_{\sigma^{(0)} \in \mathfrak{S}(k)} [ \epsilon(\sigma^{(0)})]^D N^{-(D-1) k  + \sum_{c=1}^D F^{0c}(   \sigma^{(0)} \sigma^{\cD}  ) }  \;.
\end{align*}
We now observe that the faces with colors $0c$ of $ \sigma^{(0)} \sigma^{\cD}  $ are all the faces of $ \sigma^{(0)} \sigma^{\cD}  $ minus the faces not containing the color $0$, that is the faces of $\sigma^{\cD}$:
\[
 \sum_{c=1}^D F^{0c}(  \sigma^{(0)} \sigma^{\cD} )   = F(   \sigma^{(0)} \sigma^{\cD} )  - F( \sigma^{\cD})\;.
\]
Using Eq.~\ref{eq:faceface} for the $(D+1)$--colored graph $ \sigma^{(0)} \sigma^{\cD}$ and its $D$--colored subgraph $\sigma^{\cD}$ yields:
\begin{align*}
  & F(   \sigma^{(0)} \sigma^{\cD} )  - F( \sigma^{\cD}) \crcr
  & = (D-1)k + D C( \sigma^{(0)} \sigma^{\cD}) - (D-1) C(\sigma^{\cD}) - \frac{2}{(D-1)! } \omega(\sigma^{(0)} \sigma^{\cD} ) + \frac{2}{(D-2)!} \omega(\sigma^{\cD}) \;.
\end{align*}
This relation holds for all $D\ge 2$ (with the convention that for $D=2$, $\omega(\sigma^{\cD})=0$). We then obtain the Gaussian expectation of a trace invariant:
\begin{align*}
 \Braket{  \Tr_{  \sigma^{\cD} } (T,\bar T) } = & \sum_{\sigma^{(0)} \in \mathfrak{S}(k)} [ \epsilon(\sigma^{(0)})]^D \crcr
 & \times  N^{ D C( \sigma^{(0)} \sigma^{\cD}) - (D-1) C(\sigma^{\cD})  
    -\frac{2}{D!}  \omega(\sigma^{(0)} \sigma^{\cD} ) - \frac{2}{D(D-2)!} \bigg( \omega(\sigma^{(0)} \sigma^{\cD} ) -D \omega(\sigma^{\cD})\bigg) } \;.
\end{align*}
Combining this with Eq.~\ref{eq:thermal} we obtain for the quenched and annealed versions:
\begin{align*}
 \Braket{\ln Z^T} &=   \sum_{k\ge 1 } \frac{1}{k!k!}  \sum_{ \sigma^{(0)}\sigma^{\cD} \in \mathfrak{S}(k)^{D+1}}^{  \sigma^{\cD} \text{ connected} } 
      A_{ \sigma^{\cD} }     [ \epsilon(\sigma^{(0)})]^D
        N^{ 1-\frac{2}{D!}  \omega(\sigma^{(0)} \sigma^{\cD} ) - \frac{2}{D(D-2)!} \bigg( \omega(\sigma^{(0)} \sigma^{\cD} ) -D \omega(\sigma^{\cD})\bigg)  }  \;, \crcr
\ln\Braket{Z^T}  & = \sum_{k\ge 1 } \frac{1}{k!k!}  \sum_{ \sigma^{(0)}\sigma^{\cD}  \in \mathfrak{S}(k)^{D+1}}^{  \sigma^{(0)}\sigma^{\cD} \text{ connected} } 
      A_{ \sigma^{\cD} }     [ \epsilon(\sigma^{(0)})]^D \crcr
      & \qquad \qquad \times N^{ 1 - (D-1) \big[ C(\sigma^{\cD}) -1 \big]  -\frac{2}{D!}  \omega(\sigma^{(0)} \sigma^{\cD} ) - \frac{2}{D(D-2)!} \bigg( \omega(\sigma^{(0)} \sigma^{\cD} ) -D \omega(\sigma^{\cD})\bigg) } \; . 
\end{align*}

As the degree is non-negative and recalling Eq.~\ref{eq:inegali} we observe that these two formulae are $1/N$ expansions: the scaling with $N$ of a term is $N$ or lower.
The only difference between the two formulae is that in the quenched model, \emph{i.e.} for $ \Braket{\ln Z^T} $, $\sigma^{\cD}$ is required to be connected (which of course implies that $\sigma^{(0)}\sigma^{\cD}$ is connected), 
while in the annealed model, \emph{i.e.} for $ \ln\Braket{Z^T} $, $\sigma^{(0)}\sigma^{\cD}$ is required to be connected (while $\sigma^{\cD}$ can be disconnected).
However, if  $\sigma^{\cD}$  is disconnected, then its contribution is suppressed by a factor:
\[N^{ -(D-1)\big[ C(\sigma^{\cD}) -1 \big]   } \le N^{-(D-1)} \; .\]
It follows that:
\[
\boxed{ \lim_{N\to \infty} \frac{\Braket{\ln Z^T}}{N} = \lim_{N\to \infty} \frac{\ln\Braket{Z^T}}{N} = \sum_{k\ge 1 } \frac{1}{k!k!}  
 \sum_{ \sigma^{(0)}\sigma^{\cD} \in \mathfrak{S}(k)}^{ \genfrac{}{}{0pt}{}{  \sigma^{\cD} \text{ connected}}{  \omega(\sigma^{(0)}\sigma^{\cD} ) =0 } } 
      A_{ \sigma^{\cD} }     [ \epsilon(\sigma^{(0)})]^D    \;. } 
\]

This equality holds for all $D\ge 2$. 

\subsection{The leading order graphs}\label{sec:part3}

For all $D\ge 2$ the leading order graphs $\sigma^{(0)} \sigma^{\cD}$ are such that:
\begin{itemize}
 \item $\omega( \sigma^{(0)} \sigma^{\cD}) =0 $, which from Eq.~\ref{eq:inegali} implies that also $\omega(\sigma^{\cD})=0$
 \item $\sigma^{\cD}$ is connected, which implies that also $\sigma^{(0)} \sigma^{\cD}$ is connected
\end{itemize}

The leading order graphs must be analyzed separately for $D=2$ and $D\ge 3$.

\paragraph{The $D=2$ case.} The connected graphs with two colors are bi colored cycles. Adding the edges of color $0$ one obtains $3$-colored graphs which are just combinatorial maps. In this 
case the degree is the genus of the map. It ensues that only planar trivalent graphs with a unique face with colors $1$ and $2$ contribute in the large $N$ limit. The number of such unlabeled 
graphs with $k$ white vertices and one edge of color $1$ marked is:
\[
 \frac{1}{2k+1}\binom{2k+1}{k}
\]
Some examples are presented in Fig.~\ref{fig:leading2}

\begin{figure}[htb]
\begin{center}
\includegraphics[width=8cm]{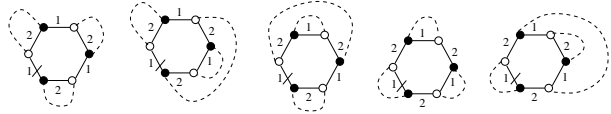}
\caption{The $5$ leading order $(2+1)$-colored graphs with three white vertices. The edges of color $0$ are represented as dashed.}
\label{fig:leading2}
\end{center}
\end{figure}

\paragraph{The $D\ge 3$  case.}

For $D\ge 3$, $(D+1)$--colored graphs of degree zero are \emph{melonic graphs} \cite{critical}. 
The first example of a melonic graph is the graph with two vertices connected by $D+1$ edges. All the other 
melonic graphs are obtained by inserting iteratively two vertices connected by $D$ edges on any of 
the available edges of a melonic graph. Observe that if $\sigma^{(0)}\sigma^{\cD}$ is melonic, 
then $\sigma^{\cD}$ is also melonic. The number of melonic $D$-colored graphs with $k$ white vertices and a marked edge of color $1$ is:
\[
 \frac{1}{Dk+1} \binom{Dk+1}{k} \;.
\]
If $\sigma^{\cD}$ is a melonic graph then there exists a unique permutation $\sigma^{(0)}$ such that $\sigma^{(0)}\sigma^{\cD}$ is melonic.
Some examples are presented in Fig.~\ref{fig:leadingD}.

\begin{figure}[htb]
\begin{center}
\includegraphics[width=8cm]{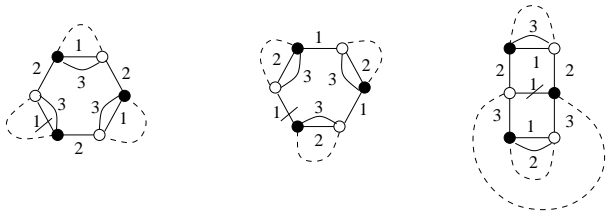}
\caption{Some examples of leading order $(3+1)$-colored graphs with three white vertices.}
\label{fig:leadingD}
\end{center}
\end{figure}

This family of graphs has been extensively studied in the literature \cite{Maldacena:2016hyu}.

\bibliography{/home/razvan/Desktop/lucru/Ongoing/Refs/Refs.bib}

\end{document}